\newcommand{\tinyspace}{\mspace{1mu}}
\newcommand{\abs}[1]{\left\lvert\tinyspace #1 \tinyspace\right\rvert}
\newcommand{\norm}[1]{\left\lVert\tinyspace #1 \tinyspace\right\rVert}
\newcommand{\setft}[1]{\mathrm{#1}}
\newcommand{\density}[1]{\setft{D}\left(#1\right)}
\def\I{\mathbb{1}}
\newenvironment{mylist}[1]{\begin{list}{}{
    \setlength{\leftmargin}{#1}
    \setlength{\rightmargin}{0mm}
    \setlength{\labelsep}{2mm}
    \setlength{\labelwidth}{8mm}
    \setlength{\itemsep}{0mm}}}
    {\end{list}}
\newcommand{\iinner}[2]{\langle #1 | #2\rangle}
\newcommand{\out}[2]{| #1\rangle\langle #2 |}
\newcommand{\Innerm}[3]{\left\langle #1 \left| #2 \right| #3 \right\rangle}
\newcommand{\Pa}[1]{\left(#1\right)}
\newcommand{\set}[1]{\{#1\}}
\newcommand{\ket}[1]{|#1\rangle}
\DeclareMathOperator{\trace}{Tr}
\newcommand{\Ptr}[2]{\trace_{#1}\Pa{#2}}
\newcommand{\Tr}[1]{\Ptr{}{#1}}
\def\cH{\mathcal{H}}
\def\rB{\mathrm{B}}
\def\rF{\mathrm{F}}
\def\rS{\mathrm{S}}
\newtheorem{thrm}{Theorem}[section]
\newtheorem{cor}[thrm]{Corollary}
\theoremstyle{definition}
\numberwithin{equation}{section}
\newcounter{questionnumber}
\newcommand{\bm}[1]{\mbox{\boldmath{$#1$}}}
\begin{document}

\title{\large A lower bound on the fidelity between two states in terms of their Bures distance}

\author{Lin Zhang\footnote{E-mail: godyalin@163.com; linyz@zju.edu.cn}\\
  {\it\small Institute of Mathematics, Hangzhou Dianzi University, Hangzhou 310018, PR~China}\\
  Junde Wu\\
  {\it\small Department of Mathematics, Zhejiang University, Hangzhou 310027,
  PR~China}}
\date{}
\maketitle
\maketitle \mbox{}\hrule\mbox\\
\begin{abstract}

Fidelity is a fundamental and ubiquitous concept in quantum
information theory. Fuchs-van de Graaf's inequalities deal with
bounding fidelity from above and below. In this paper, we give a
lower bound on the quantum fidelity between two states in terms of
their Bures distance.\\~\\
\textbf{Keywords:} Fidelity; Fuchs-van de Graaf's inequality; Bures
distance
\end{abstract}
\maketitle \mbox{}\hrule\mbox\\

\section{Introduction}

The \emph{fidelity} between two quantum states, represented by
density operators $\rho$ and $\sigma$, is defined as
\begin{eqnarray*}
\rF(\rho, \sigma ) = \Tr{\sqrt{\sqrt{\rho}\sigma\sqrt{\rho}}}.
\end{eqnarray*}
Note that both density operators here are taken from
$\density{\cH_d}$, the set of all positive semi-definite operator
with unit trace on a $d$-dimensional Hilbert space $\cH_d$. The
squared fidelity above has been called \emph{transition probability}
\cite{Uhlmann76,Uhlmann2011}. Operationally it is the maximal
success probability of changing a state to another one by a
measurement in a larger quantum system. The fidelity is also
employed in a number of problems such as quantifying entanglement
\cite{Vedral}, and quantum error correction \cite{Kosut}, etc.

For quantum fidelity, the well-known Fuchs-van de Graaf's inequality
states that: For arbitrary two density operators $\rho$ and $\sigma$
in $\density{\cH_d}$, it holds that
\begin{eqnarray}
1-\frac12\norm{\rho-\sigma}_1 \leqslant \rF(\rho, \sigma) \leqslant
\sqrt{1-\frac14\norm{\rho-\sigma}^2_1},
\end{eqnarray}
which established a close relationship between the trace-norm of the
difference for two density operators and their fidelity
\cite{Watrous08}.

Given two density operators $\rho,\sigma\in\density{\cH_d}$,
$\set{\lambda\rho + (1-\lambda)\sigma: \lambda\in [0,1]}$ is the
affine mixed path of $\rho$ and $\sigma$. In the recent papers
\cite{Audenaert1,Audenaert2,Audenaert3}, the author considered the
estimate of relative entropy between a fixed state $\rho$ and the
affine mixed path $\lambda\rho + (1-\lambda)\sigma$. In this paper,
we will consider similar problem for fidelity.

\section{Main result}

The Fuchs-van de Graaf's inequality can not be improved because it
is tight. For any value of the trace-norm between $\rho$ and
$\sigma$, there exists a pair of states saturating the inequality.
However, by supplying additional information about the pair it is
possible to obtain a higher lower bound on the fidelity.

\begin{thrm}\label{th:zhang}
Let $\rho$ and $\sigma$ be two pure states in $\density{\cH_d}$,
$\lambda\in [0,1]$. Then
\begin{eqnarray}\label{eq:improved-ineq}
\rF(\rho, \lambda\rho + (1-\lambda)\sigma) \geqslant
1-\frac12(1-\sqrt{\lambda})\norm{\rho-\sigma}_1.
\end{eqnarray}
\end{thrm}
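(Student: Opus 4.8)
The plan is to exploit purity to collapse the statement to a single one-variable algebraic inequality. Write $\rho = |\psi\rangle\langle\psi|$ and $\sigma = |\phi\rangle\langle\phi|$, and set $p = |\langle\psi|\phi\rangle|^2 \in [0,1]$. Since $\rho$ is a rank-one projection we have $\sqrt{\rho}=\rho$ and $\rho\,\tau\,\rho = \langle\psi|\tau|\psi\rangle\,\rho$ for every $\tau\in\density{\cH_d}$, so
\[
\rF(\rho,\tau) = \Tr{\sqrt{\rho\tau\rho}} = \sqrt{\langle\psi|\tau|\psi\rangle}\,.
\]
Taking $\tau = \lambda\rho+(1-\lambda)\sigma$ gives $\rF(\rho,\lambda\rho+(1-\lambda)\sigma) = \sqrt{\lambda + (1-\lambda)\,p}$. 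For the trace-norm term, note that $\rho-\sigma$ is a traceless Hermitian operator of rank at most two with $\Tr{(\rho-\sigma)^2} = \Tr{\rho^2}-2\Tr{\rho\sigma}+\Tr{\sigma^2} = 2-2p$, so its nonzero eigenvalues are $\pm\sqrt{1-p}$ and $\norm{\rho-\sigma}_1 = 2\sqrt{1-p}$. Hence (\ref{eq:improved-ineq}) is equivalent to
\[
\sqrt{\lambda + (1-\lambda)\,p} \;\geqslant\; 1 - (1-\sqrt{\lambda})\sqrt{1-p}\,.
\]

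To finish I would square and simplify. The right-hand side above equals $1-(1-\sqrt\lambda)\sqrt{1-p}$, which lies in $[\sqrt\lambda,1]\subseteq[0,1]$ because $(1-\sqrt\lambda)$ and $\sqrt{1-p}$ both lie in $[0,1]$; since the left-hand side is nonnegative too, the inequality is equivalent to the one obtained by squaring both sides. Using $\lambda+(1-\lambda)p = 1-(1-\lambda)(1-p)$ and cancelling the constant term, the squared inequality reduces to
\[
2(1-\sqrt{\lambda})\sqrt{1-p} \;\geqslant\; \big[(1-\sqrt{\lambda})^2 + (1-\lambda)\big](1-p)\,.
\]
Now the factorization $1-\lambda = (1-\sqrt{\lambda})(1+\sqrt{\lambda})$ yields $(1-\sqrt{\lambda})^2 + (1-\lambda) = (1-\sqrt{\lambda})\big[(1-\sqrt\lambda)+(1+\sqrt\lambda)\big] = 2(1-\sqrt{\lambda})$, so the claim becomes
\[
2(1-\sqrt{\lambda})\sqrt{1-p} \;\geqslant\; 2(1-\sqrt{\lambda})(1-p)\,,
\]
which is the product of the nonnegative factor $2(1-\sqrt{\lambda})$ with the elementary inequality $\sqrt{1-p}\geqslant 1-p$ (valid since $1-p\in[0,1]$). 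This establishes (\ref{eq:improved-ineq}).

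As for where the work actually sits: once the two purity identities $\rF(\rho,\tau)^2 = \langle\psi|\tau|\psi\rangle$ and $\norm{\rho-\sigma}_1 = 2\sqrt{1-|\langle\psi|\phi\rangle|^2}$ are in hand, the remainder is a short manipulation, so there is no serious obstacle; the only points needing a little care are verifying that the right-hand side is nonnegative (so that squaring is reversible) and spotting the cancellation coming from $1-\lambda = (1-\sqrt\lambda)(1+\sqrt\lambda)$, which makes the quadratic-in-$\sqrt{1-p}$ terms match exactly. It is worth remarking that the bound is tight precisely when $p\in\{0,1\}$, and that at $\lambda=0$ it reduces to the lower Fuchs--van de Graaf inequality specialized to pure states.
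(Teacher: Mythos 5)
Your proof is correct and follows essentially the same route as the paper's: reduce to the scalar inequality $\sqrt{\lambda+(1-\lambda)p}\geqslant 1-(1-\sqrt{\lambda})\sqrt{1-p}$ via the pure-state fidelity and trace-norm formulas, square, and invoke $\sqrt{1-p}\geqslant 1-p$. You are in fact slightly more careful than the paper, which compares the squares without explicitly noting that the right-hand side is nonnegative (so that the squared comparison implies the original one).
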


\begin{proof}
Let $\rho = \out{\psi}{\psi}$ and $\sigma=\out{\phi}{\phi}$ for
nomalized vectors $\ket{\psi},\ket{\phi}\in\cH_d$. Without loss of
generality, assume that $\lambda\in(0,1)$. Denote
$\abs{\iinner{\psi}{\phi}} := r$. Then $r\in [0,1]$. Now
\begin{eqnarray}
\rF(\rho, \lambda\rho + (1-\lambda)\sigma) &=&
\sqrt{\Innerm{\psi}{\lambda\out{\psi}{\psi} +
(1-\lambda)\out{\phi}{\phi}}{\psi}}\\
&=& \sqrt{\lambda + (1-\lambda)r^2}.
\end{eqnarray}
Note that
\begin{eqnarray}
\norm{\out{\psi}{\psi}-\out{\phi}{\phi}}_1 =
2\sqrt{1-\abs{\iinner{\psi}{\phi}}^2}.
\end{eqnarray}
This implies that
\begin{eqnarray}
\norm{\rho-\sigma}_1 = 2\sqrt{1-r^2}.
\end{eqnarray}
Next, all we have to do is to show that
\begin{eqnarray}
\sqrt{\lambda + (1-\lambda)r^2} \geqslant
1-(1-\sqrt{\lambda})\sqrt{1-r^2}.
\end{eqnarray}
Denote by $f(\lambda,r)$ the difference of the squared both sides as
follows:
\begin{eqnarray}
f(\lambda,r) &:=& \Pa{\sqrt{\lambda + (1-\lambda)r^2}}^2 -
\Pa{1-(1-\sqrt{\lambda})\sqrt{1-r^2}}^2\\
&=& 2(1-\sqrt{\lambda})\Pa{\sqrt{1-r^2}-(1-r^2)}
\end{eqnarray}
Since $r\in[0,1]$, i.e. $1-r^2\in[0,1]$, it follows that
$\sqrt{1-r^2}\geqslant1-r^2$. Therefore $f(\lambda,r)\geqslant0$. We
can conclude that
\begin{eqnarray}
\sqrt{\lambda + (1-\lambda)r^2} \geqslant
1-(1-\sqrt{\lambda})\sqrt{1-r^2}.
\end{eqnarray}
The desired conclusion is obtained.
\end{proof}

The Bures distance is a very useful quantity in quantum information
theory, the Bures distance between two states is defined as
$\rB(\rho,\sigma):=\sqrt{1-\rF(\rho,\sigma)^2}$. It is seen easily
that $\frac12\norm{\rho-\sigma}_1\leqslant\rB(\rho,\sigma)$ by the
Fuchs-van de Graaf's inequality. We have the following weaker
inequality like the \emph{conjectured} inequality \eqref{eq:conj}:
\begin{cor}
Let $\rho$ and $\sigma$ be two density operators in
$\density{\cH_d}$, $\lambda\in [0,1]$. Then
\begin{eqnarray}\label{eq:improved-ineq}
\rF(\rho, \lambda\rho + (1-\lambda)\sigma) \geqslant 1-
(1-\sqrt{\lambda})\rB(\rho,\sigma).
\end{eqnarray}
\end{cor}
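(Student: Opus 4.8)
The plan is to deduce this mixed-state bound from the pure-state inequality of Theorem~\ref{th:zhang} by combining Uhlmann's theorem with the monotonicity (contractivity) of fidelity under trace-preserving completely positive maps. Both of these facts are ``one-directional'' in exactly the way we need: Uhlmann lets us pass to pure states without changing the Bures distance, while the partial trace can only \emph{increase} the fidelity along the affine path, so the two estimates chain together in the right direction.

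Concretely, I would first adjoin a second copy of $\cH_d$ (which has enough dimension to purify any state in $\density{\cH_d}$) and, by Uhlmann's theorem, choose unit vectors $\ket{\psi},\ket{\phi}\in\cH_d\otimes\cH_d$ purifying $\rho$ and $\sigma$ respectively and attaining the fidelity, i.e. $\abs{\iinner{\psi}{\phi}}=\rF(\rho,\sigma)$. Setting $P=\out{\psi}{\psi}$ and $Q=\out{\phi}{\phi}$, Theorem~\ref{th:zhang} applied to the pure states $P,Q$ and the given $\lambda$ gives $\rF(P,\lambda P+(1-\lambda)Q)\geqslant 1-\tfrac12(1-\sqrt{\lambda})\norm{P-Q}_1$. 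Then I would invoke the identity $\norm{\out{\psi}{\psi}-\out{\phi}{\phi}}_1=2\sqrt{1-\abs{\iinner{\psi}{\phi}}^2}$ already used inside the proof of Theorem~\ref{th:zhang}, together with the optimality of the chosen purifications, to rewrite $\tfrac12\norm{P-Q}_1=\sqrt{1-\rF(\rho,\sigma)^2}=\rB(\rho,\sigma)$; hence $\rF(P,\lambda P+(1-\lambda)Q)\geqslant 1-(1-\sqrt{\lambda})\rB(\rho,\sigma)$.

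The final step is to descend back to the original system: tracing out the auxiliary factor is a channel, and fidelity is non-decreasing under channels, so $\rF(\rho,\lambda\rho+(1-\lambda)\sigma)=\rF\bigl(\ptr{2}{P},\ptr{2}{\lambda P+(1-\lambda)Q}\bigr)\geqslant \rF(P,\lambda P+(1-\lambda)Q)$, using only $\ptr{2}{P}=\rho$, $\ptr{2}{Q}=\sigma$, and linearity of the partial trace; chaining this with the previous bound completes the proof. The one point that genuinely needs care — the ``main obstacle'' — is the role of the \emph{optimal} purification: an arbitrary pair of purifications satisfies only $\abs{\iinner{\psi}{\phi}}\leqslant\rF(\rho,\sigma)$, so $\rB(P,Q)\geqslant\rB(\rho,\sigma)$, which would make the intermediate inequality useless for our purpose; it is essential that Uhlmann's theorem supplies purifications attaining equality, so that $\rB(P,Q)=\rB(\rho,\sigma)$ exactly. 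Everything else is routine bookkeeping.
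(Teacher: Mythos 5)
Your proposal is correct and follows essentially the same route as the paper: choose Uhlmann-optimal purifications, apply the pure-state theorem to them, rewrite the trace norm of the pure-state difference as the Bures distance, and descend via monotonicity of fidelity under the partial trace. The point you flag as needing care --- that the purifications must attain $\abs{\iinner{\Psi}{\Phi}}=\rF(\rho,\sigma)$ rather than merely lower-bound it --- is exactly the ingredient the paper's proof relies on as well.
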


\begin{proof}
Let $\ket{\Psi}$ and $\ket{\Phi}$ be any purifications of $\rho$ and
$\sigma$, respectively such that $\rF(\rho,\sigma) =
\abs{\iinner{\Psi}{\Phi}}$. Then
\begin{eqnarray}
\rF(\rho, \lambda\rho + (1-\lambda)\sigma) &\geqslant&
\rF(\out{\Psi}{\Psi},\lambda\out{\Psi}{\Psi}+(1-\lambda)\out{\Phi}{\Phi})\\
&\geqslant& 1 - (1-\sqrt{\lambda})\frac12\norm{\out{\Psi}{\Psi} -
\out{\Phi}{\Phi}}_1\\
&=&1 - (1-\sqrt{\lambda})\sqrt{1-\abs{\iinner{\Psi}{\Phi}}^2},
\end{eqnarray}
implying the desired inequality.
\end{proof}

\section{Discussion}

Naturally, we have the following \emph{conjecture}: Let $\rho$ and
$\sigma$ be \emph{any} two states in $\density{\cH_d}$, $\lambda\in
[0,1]$. Then
\begin{eqnarray}\label{eq:conj}
\rF(\rho, \lambda\rho + (1-\lambda)\sigma) \geqslant
1-\frac12(1-\sqrt{\lambda})\norm{\rho-\sigma}_1.
\end{eqnarray}
If this conjecture \emph{were} true, then we \emph{would} supply the
max-relative entropy between the states as additional information in
the lower bound of fidelity. The max-relative entropy is defined as
$$
\rS_{\max}(\rho||\sigma) :=\inf_\gamma\set{\gamma: \rho\leqslant
e^\gamma\sigma}.
$$
Clearly $e^{\rS_{\max}(\rho||\sigma)} =
\lambda_{\max}(\sigma^{-1/2}\rho\sigma^{-1/2})$. The inequality
\eqref{eq:conj} is about the special pair of states $\rho$ and
$\lambda\rho+(1-\lambda)\sigma$ and seems to be of rather restricted
importance. However it is possible to reformulate the inequality as
an inequality about any pair of states, we will now show the
following conclusion:
\begin{eqnarray}\label{eq:lower-bound-2}
\rF(\rho,\sigma)\geqslant
1-\frac12\frac{\sqrt{e^{\rS_{\max}(\rho||\sigma)}}}{\sqrt{e^{\rS_{\max}(\rho||\sigma)}}+1}\norm{\rho
- \sigma}_1
\end{eqnarray}
for $\rho,\sigma\in\density{\cH_d}$. Indeed, given any two density
operators $\rho$ and $\sigma$, we know that if $\rho\sigma\neq0$,
then
$$
\min\set{\lambda>0: \rho\leqslant \lambda\sigma} =
\lambda_{\max}(\sigma^{-1/2}\rho\sigma^{-1/2}) := \lambda_0,
$$
where the notation $\lambda_{\max}(X)$ is used to denote the maximum
eigenvalue of the operator $X$, we define also $\min\set{\lambda>0:
\rho\leqslant \lambda\sigma} = +\infty$ if $\rho\sigma=0$. Clearly
$\lambda_0>0$. If denote $\hat\sigma:= \frac{\sigma -
\lambda^{-1}_0\rho}{1-\lambda^{-1}_0}$, then
$$
\sigma =  \lambda^{-1}_0\rho + (1- \lambda^{-1}_0)\hat\sigma.
$$
Therefore
\begin{eqnarray*}
\rF(\rho,\sigma) = \rF(\rho,\lambda^{-1}_0\rho + (1-
\lambda^{-1}_0)\hat\sigma) \geqslant
1-\frac12\Pa{1-\sqrt{\lambda^{-1}_0}}\norm{\rho-\sigma}_1
\end{eqnarray*}
implies that
$$
\rF(\rho,\sigma)\geqslant
1-\frac12\frac{\sqrt{\lambda_0}}{\sqrt{\lambda_0}+1}\norm{\rho-\sigma}_1~~\text{for}~~\lambda_0>0.
$$
The lower bound in the above inequality is indeed tighter than one
in Fuchs-van de Graaf's inequality. Thus, we get a state-dependent
factor in the lower bound for fidelity. That is, when $\lambda_0=
+\infty$, the above lower bound is reduced to the lower bound in
Fuchs-van de Graaf's inequality.

For the related problems along this line such as min- and max-
(relative) entropy, it is referred to \cite{Datta}.

\section{Conclusion}

In this paper, we obtained a lower bound on the fidelity between a
fixed state and its a mixed path with another state. Based on this
result, we derived a lower bound  on the fidelity between two
states. The result may shed new light on some related problems in
quantum information theory, for example, ones can consider bounding
the input-output fidelity of transpose channel.

\subsubsection*{Acknowledgements}
LZ and JW are supported by the Natural Science Foundations of China
(11301124, 11171301) and the Doctoral Programs Foundation of
Ministry of Education of China (J20130061).


\end{document}